\begin{document}

\title*{An anisotropic model for global climate data}
\subtitle{\emph{Un modello anisotropico per i dati climatici globali}}
\author{Nil Venet\protect\footnote{nil.venet@unibg.it, https://nilvenet.org/} and Alessandro Fassò\protect\footnote{alessandro.fasso@unibg.it}}
%
%
\maketitle

\vspace{-2cm}
\emph{The following article has been accepted to the \emph{Statistical models for climate data} invited specialized session at the 2019 conference of the Italian Statistics Society  \emph{"SIS2019"}.}

\vspace{2cm}

\abstract{We present a new, elementary way to obtain  axially symmetric Gaussian processes on the sphere, in order to accommodate for the directional anisotropy of global climate data in geostatistical analysis.}
\abstract{\emph{Presentiamo un nuovo modo elementare per ottenere processi gaussiani assialmente simmetrici sulla sfera, al fine di accogliere l'anisotropia direzionale dei dati climatici nelle analisi geostatistiche su scala globale.}}
\keywords{climate data, geostatistics, Gaussian processes, anisotropy, sphere}

\section{Introduction}
\label{sec:intro}
Following the increasing amount of measurements and computing power available, geostatistics has evolved in the last decades to consider massive and global scale datasets. This adaptation presents new challenges, such as taking into account the spherical nature of Earth, and the considerable non-stationarity of global datasets, while maintaining computing feasibility.

We focus in this article on Gaussian process based geostatistics for climate data. Isotropic models on the sphere, which assume that the data has the same statistical behavior in every direction, are well understood and the literature gives rich classes of covariances (see for example \cite{Gneiting_2013} and the reviews \cite{Jeong_al_2017} and \cite{Porcu_al_2018}).

However, climate data exhibits various nonstationary effects at the global scale. In particular, it is clear that the variability of climate variables is higher in the north-south direction, than in the east-west. Jones \cite{Jones_1963} notices this issue and defines axially symmetric Gaussian processes on the sphere, which are stationary in longitude only. He also characterizes their decomposition on the spherical harmonic basis, which Stein \cite{Stein_2007} truncates to a finite order to carry on a statistical analysis of a Total Ozone dataset at a global scale. While very flexible, Stein's model involves the estimation of a large number of coefficients, and still doesn't fit the local behaviour of the Ozone dataset. This calls for explicit axially symmetric covariances that relies on a small number of parameters. Such covariance functions are obtained through partial differentiation of isotropic processes by Jun and Stein \cite{Jun_Stein_2007,Jun_Stein_2008}, and further studied by
Hitczenko and Stein \cite{Hitczenko_Stein_2012}. Recently Porcu et al. \cite{Porcu_al_2019} proposed to modify variograms of isotropic covariances to obtain axially symmetric analogues, while Huang et al. \cite{Huang_al_2012} consider products of separated covariances on latitudes and longitudes.

Our approach consists in considering the product of an isotropic covariance with an additional covariance on latitudes. Compared to Huang et al.,\cite{Huang_al_2012}, our method has the advantage of giving Gaussian processes that are continuous over the whole sphere, including the poles.

In Section \ref{sec:data}, we give a description of two climatic datasets, which motivate our work and allow performance evaluation of models. Section \ref{sec:modeling} recalls some generalities on Gaussian processes on the sphere and introduce our model. In Section \ref{sec:persp}, we describe our ongoing work to compare model performances and properties to the existing literature.

\section{Climatic datasets}
\label{sec:data}
In this section, we describe two important sources of climatic data, which motivates our study.

The radiosonde observations database from NOAA\footnote{US National Oceanic and Atmospheric Administration} ESRL\footnote{Earth System Research Laboratory} gathers worldwide radiosonde observations, that are considered as an important atmospheric observation standard. We propose to use it to assess model performances on real data.

We also consider the reanalysis ERA-Interim archive from ECMWF\footnote{European Centre for Medium-Range Weather Forecasts}, that does not present the drawbacks of real data (missing data, inhomogeneous spatial covering) but rather constitutes a resource whose coherence and completeness allows to extract tailor-fitted datasets to assess a virtually unlimited variety of hypothesis.

\subsection{Radiosonde observations}
As mentioned above, the NOAA ESRL database gathers worldwide radiosonde observations.
There are around 900 ground stations performing radiosonde observations in the world. Ascending balloons equipped with measurement devices are launched, radio-transmitting the observations of meteorological variables about every few seconds to the ground.

These measurements are stored at a restricted number of altitudes levels, corresponding to pressure levels: there are 22 \emph{mandatory levels} (from the ground up to 1 hPa) common to every sounding, to which are added a varying number of \emph{significant levels} (28 on the average), that depend on the launch and correspond to important variations in measured temperature or dew point depression. Missing data at the highest mandatory levels are frequent as the balloon explodes between 30 and 35km of altitude.

There are typically two launches per day in a station, occurring at times close to 00 and 12 UTC, but additional and missing launches occur. Data is available publicly and can be downloaded from the NOAA ESRL servers.

\subsection{The ERA-Interim reanalysis}
Produced by the ECMWF, the ERA-Interim archive is a reanalysis of global atmospheric data from 1979, updated in real-time to nowadays.

This model output is built from the input of numerous meteorological datasets of various nature (mostly satellites measurements, but also radiosondes, land, boat, planes,... measurements), which are assimilated and extrapolated following a 4-dimensional variational analysis.

Era-Interim provides values for a considerable number of meteorological variables. Atmospheric variables are given at a vertical resolution of sixty pressure levels (from the ground to 0.1 hPa) on the horizontal reduced Gaussian grid N128, which is regular in latitude but has a variable longitudinal resolution to maintain a ground resolution of approximately 79km (see details in \cite{ERA_grid}). Time resolution is 6h. Surface variables are given with the same horizontal display and a time resolution of 3 hours. In both cases it is also possible to download data which have been extrapolated on regular latitude/longitude grids at various resolutions.

Advantages of this archive consist in its completeness, its coherence in the methodology and in the physical consistency of the data. From this, it is possible to tailor-fit datasets to a virtually unlimited number of applications. 

Specifications of the archive are given in \cite{ERAreport}, while the reanalysis process is described in \cite{ERAarticle}. Data is available publicly and can be downloaded from the ECMWF servers.

\section{Spatial Modeling}
\label{sec:modeling}
We briefly recall generalities about Gaussian process based geostatistics (for which we refer to Cressie \cite{Cressie_book_1992} for more details) and Gaussian processes on the sphere before to introduce our new model.
\subsection{Gaussian process geostatistics}

The classical approach in geostatistics, given observed data $y_1,\cdots,y_n \in \mathbb{R}$ at locations $x_1,\cdots,x_n$ in some space $S$ consists in considering a Gaussian process $(Y_x)_{x\in S}$ indexed by $S$, and estimating the value of the variable of interest at a new location $x^*$ by the Kriging estimator
\begin{equation}\hat{Y}(x^*) = \mathbb{E}(Y(x^*)|Y(x_1) = y_1, \cdots, Y(x_n) = y_n ).\end{equation}
The quality of this prediction critically relies on the choice of the Gaussian process $Y$, whose statistical behavior should fit as much as possible the data of interest. The classical approach consists in selecting a Gaussian process from a parametric class $Y_\theta$, where $\theta$ is a vector of parameters, which is typically done by maximizing the likelihood. The key asset of this approach is that prediction comes with a statistical model of the data, which allows in particular the computation of the conditional variance
\begin{equation}Var(Y(x^*)|Y(x_1) = y_1, \cdots, Y(x_n) = y_n ),\end{equation}
that quantifies the prediction uncertainty at the location $x^*$.

Due to the $O(n^3)$ numerical complexity of both Kriging prediction and likelihood evaluation, this approach proves itself computationally heavy -- if not unfeasible -- on massive datasets. There exists an important literature on adaptations to large datasets. See for example \cite{Heaton_al_2018} for a comparison of methods a single case study.

However in the majority of methods, the primary ingredient remain Gaussian processes, and finding classes of Gaussian processes that suit practical applications constitute an entire field of research.

Let us recall that the statistical properties of a Gaussian process are entirely characterized by its expectation and covariance functions. As the choice of a a mean function is free from any constrains, in this article all Gaussian fields will be assumed centered. In this setting, considering a Gaussian process is equivalent to considering a covariance function.

In contrast with the mean function, not every function is admissible as a covariance. Let us recall that a function $F : X\times X \rightarrow \mathbb{R}$ is the covariance of a Gaussian process $(X_x)_{x\in X}$ if and only if it is symmetric and \emph{positive definite}, that is
\begin{equation}\forall \lambda_1,\cdots,\lambda_n \in \mathbb{R}, \forall x_1 \cdots x_n \in X, ~ \sum_{i,j=1}^n \lambda_i \lambda_j F(x_i,x_j) \geq 0 .\end{equation}
In this case we will often say that $F$ is a \emph{valid covariance}.

The finding of classes of covariance functions that give Gaussian properties which fit data from application is the first, necessary step of Gaussian process modeling.
 
\subsection{Gaussian processes on the sphere}
\label{subsec:sphere_gen}
In order to model variables on the Earth, we consider Gaussian processes indexed by the sphere $\mathbb{S}^2$, which we will take of radius $1$. To a given point $x$ on the sphere are associated its longitude $\theta_x \in [-\pi,\pi]$ and latitude $\varphi_x \in [-\pi/2,\pi/2]$, in radians. Observe that $\theta_x$ and $\varphi_x$ are well defined for every $x$ distinct from the north or south poles, for which $\varphi_x$ is $\pi/2$ (resp. $-\pi/2$) but $\theta_x$ can take any value. This singularity of the spherical coordinates has some practical consequences when it comes to define a covariance on the whole sphere, as we will see in Section \ref{subsec:model}.

Given a Gaussian random field $X_{x\in \mathbb{S}^2}$ and $x,y \in \mathbb{S}^2$, we will write $K_X(x,y)$ or $K_X(\theta_x,\theta_y,\varphi_x,\varphi_y)$ for its covariance function.

As in the Euclidean case, a simplifying assumption is to consider stationary models.A A Gaussian process on the sphere is called \emph{isotropic} if $K_X(x,y) = F(d(x,y))$, where $d(x,y)$ is the great circle distance on the sphere, given by the expression:
\begin{equation}d(x,y) = \cos^{-1}{( \sin{\varphi_1} \cdot \sin{\varphi_2} + \cos{\varphi_1} \cdot \cos{\varphi_2} \cdot \cos{(\theta_2 - \theta_1)} )}.\end{equation}

We refer to Gneiting\cite{Gneiting_2013}, Porcu et al. \cite{Porcu_al_2018}, Jeong et al. \cite{Jeong_al_2017} and references within for a state of the art on isotropic Gaussian processes.

However at the global scale, climatic data exhibits important nonstationarity effects: in particular, it is clear that climatic variables are typically correlated at a shorter range in the latitudinal direction than in the longitudinal. Jones \cite{Jones_1963} proposes to address this issue and defines \emph{axially symmetric} Gaussian processes, which are stationary only in longitude variable, that is to say their covariance verifies
\begin{equation}K_X(x,y) = F(\theta_x-\theta_y \mod 2\pi,\varphi_x,\varphi_y).\end{equation} Furthermore an axially symmetric Gaussian process is said to be \emph{latitudinally reversible} if
\begin{equation}F(\theta_x-\theta_y \mod 2\pi,\varphi_x,\varphi_y) = F(\theta_x-\theta_y \mod 2\pi,\varphi_y,\varphi_x).\end{equation}

\subsection{A new class of axially symmetric covariances}
\label{subsec:model}
In the Euclidean case, a natural idea to obtain anisotropic covariances is to consider the product of covariances with respect to different coordinates. For example considering for $x,y \in \mathbb{R}^2,$
\begin{equation}K(x,y) = \sigma e^{-\frac{|x_1-y_1|}{r_1}}\cdot \ e^{-\frac{|x_2-y_2|}{r_2}}\end{equation}
with distinct scale parameters $r_1$ and $r_2$ yields a model that is correlated at a longer range in a chosen direction. On the sphere one might consider in a similar way products of covariances in latitude and longitude, such as for $x,y\in \mathbb{S}^2,$
\begin{equation}\label{eq:exp_sep}K(x,y) = \sigma e^{-\frac{|\varphi_x-\varphi_y|}{r_{\varphi}}}\cdot \ e^{-\frac{|\theta_x-\theta_y|}{r_{\theta}}},\end{equation}
but in this case, since the longitude coordinate can take any value at the poles of the sphere (see Section \ref{subsec:sphere_gen}), the covariance is not well defined when $x$ or $y$ is one of the two poles. Moreover, there is no way to chose a value for $K$ at these points to obtain a continuous kernel, since when $x$ (or $y$) closes to one of the pole, the value taken by $K(x,y)$ depend on the way $x$ approaches the pole.

Discontinuity of the covariance kernel yields Gaussian fields that are discontinuous in $L^2$ and whose realizations are not almost surely continuous. This is unwanted in most applications where variables exhibits a continuous behavior over the sphere, such as climatic applications. In the case of \eqref{eq:exp_sep} we obtain an expected singular behavior at the poles.

To address this issue, we consider instead of \eqref{eq:exp_sep} a covariance 
\begin{equation}\label{eq:exp_our}K(x,y) = \sigma e^{-\frac{d(x,y)}{r_{iso}}}\cdot \ e^{-\frac{|\varphi_x-\varphi_y|}{r_{\varphi}}}.\end{equation}
Notice that since $(x,y) \mapsto d(x,y)$ and $x \mapsto \varphi_x$ are continuous function on the sphere, this time $K$ is continuous everywhere. The term $e^{-\frac{|\varphi_x-\varphi_y|}{r_{\varphi}}}$ can be seen as an additional decorrelation to the isotropic covariance $\sigma e^{-\frac{d(x,y)}{r_{iso}}}$ in the latitudinal direction.

Extending this idea to kernels other than the exponential, we obtain the following class of covariances.

\begin{theorem} Let $K_{iso}$ be an isotropic covariance on the sphere and $K_\varphi$ be a covariance on $[-\pi,\pi]$.
	
	The kernel defined by
	\begin{equation} \label{eq:general_form} K(x,y) = K_{iso}(x,y) \cdot K_\varphi(\varphi_x,\varphi_y)\end{equation}
	is a latitudinally reversible, axially symmetric covariance on the sphere.
	
	Furthermore, if $K_{iso}$ and $K_\varphi$ are continuous, $K$ is continuous on the whole sphere, and as such, a Gaussian field with covariance $K$ is continuous in $L^2$ sense, and has almost surely continuous trajectories.
\end{theorem}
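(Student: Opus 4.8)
The plan is to separate the three algebraic properties (positive definiteness, axial symmetry, latitudinal reversibility) from the two analytic ones (continuity of the kernel and regularity of the paths), since they call for different tools. For the covariance claim, symmetry is immediate because both factors are symmetric. For positive definiteness I would fix points $x_1,\dots,x_n\in\mathbb S^2$ and form the Gram matrices $A=[K_{iso}(x_i,x_j)]_{i,j}$ and $B=[K_\varphi(\varphi_{x_i},\varphi_{x_j})]_{i,j}$. The latitudes satisfy $\varphi_{x_i}\in[-\pi/2,\pi/2]\subset[-\pi,\pi]$, so $B$ lies in the domain of $K_\varphi$ and is positive semidefinite because $K_\varphi$ is a covariance, while $A$ is positive semidefinite because $K_{iso}$ is. Since $[K(x_i,x_j)]_{i,j}$ is exactly the Hadamard (entrywise) product $A\circ B$, the Schur product theorem shows it is positive semidefinite, so $K$ is a valid covariance in the sense recalled above.

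The symmetry-type properties I would read off the great-circle distance formula. There $d(x,y)$ depends on the longitudes only through $\cos(\theta_2-\theta_1)$, hence only through $\theta_x-\theta_y \bmod 2\pi$, and it is visibly symmetric under exchanging $\varphi_x$ and $\varphi_y$; consequently $K_{iso}=F_{iso}\circ d$ is already axially symmetric and latitudinally reversible. The second factor $K_\varphi(\varphi_x,\varphi_y)$ does not depend on longitude and is symmetric in its two arguments, being a covariance. Since a product of axially symmetric (resp.\ latitudinally reversible) kernels is again axially symmetric (resp.\ reversible), $K$ inherits both properties.

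For continuity I would stress the actual point of the construction: the latitude map $x\mapsto\varphi_x$ is continuous on the \emph{whole} sphere, poles included, in contrast to $x\mapsto\theta_x$, and $(x,y)\mapsto d(x,y)$ is continuous everywhere. Hence, when $K_{iso}=F_{iso}\circ d$ and $K_\varphi$ are continuous, each factor of $K$ is a composition of continuous maps and their product is continuous on $\mathbb S^2\times\mathbb S^2$. Mean-square continuity then follows at once from
\begin{equation}\mathbb E\big[(X_x-X_y)^2\big]=K(x,x)-2K(x,y)+K(y,y),\end{equation}
whose right-hand side tends to $0$ as $y\to x$ by continuity of $K$.

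The step I expect to be the main obstacle is the almost sure continuity of the trajectories. Continuity of $K$ gives mean-square continuity but, for a general Gaussian field, is \emph{not} by itself enough to guarantee sample-path continuity: one needs a quantitative modulus of continuity for the canonical metric $d_X(x,y)=\big(K(x,x)-2K(x,y)+K(y,y)\big)^{1/2}$, for example through Dudley's entropy integral or a Kolmogorov--Chentsov bound. I would therefore either add a Hölder (or logarithmic) regularity hypothesis on $K_{iso}$ and $K_\varphi$, or verify the concrete kernels directly---for instance the exponential kernel \eqref{eq:exp_our}, whose variogram is $O(d(x,y))$ near the diagonal (since $|\varphi_x-\varphi_y|\le d(x,y)$), so that $d_X$ is of order $d(x,y)^{1/2}$ and Kolmogorov--Chentsov yields a modification with almost surely continuous paths. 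As $\mathbb S^2$ is compact, this continuous modification is the process one retains.
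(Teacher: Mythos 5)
Your treatment of the algebraic part coincides with the paper's: the paper's entire proof consists of the two-line observation that $K$ is a valid covariance by the Schur product theorem and is axially symmetric because $K_{iso}$ is isotropic and $K_\varphi$ does not involve the longitudes. Your Gram-matrix formulation of the Schur argument, the remark that the latitudes lie in $[-\pi/2,\pi/2]\subset[-\pi,\pi]$ so that $K_\varphi$ is indeed evaluated in its domain, and your explicit check of latitudinal reversibility from the symmetry of $d$ in $(\varphi_x,\varphi_y)$ and of $K_\varphi$ in its arguments are just that argument written out in full (the paper leaves reversibility implicit). Likewise your continuity argument --- each factor is a composition of the everywhere-continuous maps $(x,y)\mapsto d(x,y)$ and $x\mapsto\varphi_x$ --- is exactly the reasoning the paper gives informally just before the theorem, and the variogram identity for $L^2$ continuity is standard.

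Your scepticism about the last clause is well founded, and it identifies a genuine gap in the paper rather than in your proposal: the paper's proof does not address sample-path continuity at all, and the implication as stated (``$K$ continuous, hence almost surely continuous trajectories'') is false for general Gaussian fields. Continuity of the covariance yields only mean-square continuity; by the Belyaev dichotomy there exist stationary Gaussian processes with continuous covariance and almost surely unbounded paths, already on the circle via random Fourier series, and the Marcus--Pisier theory of random series on compact homogeneous spaces shows the same phenomenon for isotropic fields on $\mathbb{S}^2$ (take $K_\varphi\equiv 1$ in \eqref{eq:general_form}), so the product structure does not rescue the claim. Your proposed repairs are the right ones: either add a quantitative hypothesis (a Dudley entropy condition, or a H\"older or logarithmic bound on the canonical metric $d_X$), or verify the concrete models directly. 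Your verification for the exponential kernel \eqref{eq:exp_our} is correct: $1-e^{-u}\le u$ together with $|\varphi_x-\varphi_y|\le d(x,y)$ gives $d_X(x,y)^2=O(d(x,y))$, Gaussian moment scaling gives $\mathbb{E}\,|X_x-X_y|^{2m}\le C_m\, d(x,y)^{m}$, and Kolmogorov--Chentsov with $m\ge 3$ (the sphere being two-dimensional) yields a continuous modification. One further precision worth making: since the covariance determines only the law, the conclusion should in any case read ``admits a modification with almost surely continuous trajectories.''
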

\begin{proof} As product of two valid covariances, $K$ is a valid covariance (see Schur product theorem in Zhang \cite{Zhang_book_2005}). Axial symmetry of $K$ is a direct consequence of the isotropy of $K_{iso}$ and the fact that $K_\varphi$ does not depend on $\theta_x$ and $\theta_y$.
\end{proof}

In contrast with Jun and Stein \cite{Jun_Stein_2007,Jun_Stein_2008}, whose approach seems to give richer classes of covariances, allowing for complex interactions between variables, our method permits to easily modify the classical isotropic covariances to obtain axially symmetric analogues. Porcu et al. \cite{Porcu_al_2019} attain a similar goal, but focusing on the variograms.
In comparison with Huang et al. \cite{Huang_al_2012}, our covariances present the advantage to be continuous on the whole sphere.

\section{Perspectives: performances and properties of axially symmetric models}
\label{sec:persp}
In an ongoing work, we assess the performances of our new class of axially symmetric covariances, compared to the propositions of Jun and Stein \cite{Jun_Stein_2007,Jun_Stein_2008}, Huang \cite{Huang_al_2012}, Porcu et al. \cite{Porcu_al_2019}. We take advantage of the two data sources that were presented in Section \ref{sec:data} to assess their performances in prediction, training on real data and assessing prediction with a global covering.

We expect that the simple expression \eqref{eq:general_form} of our covariances allows us to study their differentiability at the origin in every direction, allowing for Gaussian processes with homogeneous trajectory regularity, or at the contrary for models which exhibit a regularity that depends on the direction of the trajectory.
\clearpage
\bibliographystyle{spmpsci}
\bibliography{biblio.bib}
\end{document}